\newcommand{\expect}[1]{\mathbb{E}\left[#1\right]}
\newtheorem{theorem}{Theorem}
\newtheorem{lemma}{Lemma}
\newtheorem{assumption}{Assumption}
\begin{document}
\title{Delay Optimal Power Aware Opportunistic Scheduling with Mutual Information Accumulation}
\author{Xiaohan Wei and Michael J. Neely
\thanks{The authors are with the Electrical Engineering department at the University of Southern California, Los Angeles, CA.}
\thanks{This material is supported in part by the NSF grant 0964479.}}
\maketitle

\begin{abstract}
This paper considers optimization of power and delay in a time-varying
wireless link using rateless codes.  The link serves a sequence of variable-length
packets.  Each packet is coded and transmitted over multiple slots. Channel
conditions can change from slot to slot and are unknown to the transmitter.  The
amount of mutual information accumulated on each slot depends on the random
channel realization and the power used. The goal is to minimize average service
delay subject to an average power constraint. We formulate this problem as a frame-based
stochastic optimization problem and solve it via an online algorithm. We show that the
subproblem within each frame is a simple
integer program which can be effectively solved using a dynamic program. The
optimality of this online algorithm is proved using the frame-based Lyapunov
drift analysis.
\end{abstract}
\section{Introduction}
Consider a slotted time system where a single source transmits a sequence of data packets to a receiver over a time varying channel.  During each time slot, the source transmits a portion of the mutual information needed for the receiver to decode the current packet.  This is done by choosing the amount of power on each transmission.  The source does not know the current channel state when it makes this decision. The chosen power and the random channel state determine the amount of mutual information that is received. Transmission of that packet is finished when the receiver accumulates enough mutual information to decode. This technique is usually referred to as mutual information accumulation (\cite{mutual-info-1}-\cite{mutual-info-2}) and can be implemented through rateless coding schemes such as Raptor codes and Fountain codes (\cite{Fountain-codes}-\cite{Raptor-codes}).

Our goal is to design a transmission algorithm such that the time average delay of the sequence of packets is minimized and a time average power constraint is satisfied. Specifically, we consider an online setting where both the channel conditions and the packet lengths could vary according to some underlying probability distributions. This problem is challenging for two reasons: (i) Decoding times can vary randomly for each packet, (ii) The power spent during the current time slot affects the desired power constraint and also influences how much information is left to transmit for that packet in  future slots, creating complex dependencies over different slots. Fortunately, the system contains an underlying Markov renewal structure that is exploited in this paper.

In this paper, we use renewal reward theory (e.g. \cite{Gallager-renewal}) together with Lyapunov optimization (e.g. \cite{Neely2010}) to develop an online algorithm for this problem. Previous work \cite{renewal-opt-tac} introduces a general framework for applying Lyapunov optimization  to renewal systems through the idea of dividing the time line into renewal frames and solving a subproblem within each renewal frame. Works
in \cite{Wiopt-Xiaohan} and \cite{MG1-queue} analyze the performance of frame-based Lyapunov optimization for a file downloading problem and a multiclass M/G/1 queue problem, respectively. In both papers, the subproblem within each frame is effectively solved, which is the key to the success of the whole algorithm. In this paper, we show that the subproblem within each frame is an integer program which can be effectively solved using a dynamic program.

Prior work on mutual information accumulation in \cite{broadcast04}\cite{broadcast04-2}\cite{mutual-info-2}\cite{routing11} focus on a network with fixed channel conditions. Works in \cite{broadcast04} and \cite{broadcast04-2} consider the problem of minimum energy accumulative broadcast and show that it is NP-complete. Works \cite{mutual-info-2} and \cite{routing11} consider the problem of optimal cooperative routing in a multi-hop network and develop linear program based formulations. Applications to time-varying channel conditions  are considered in \cite{fading_1} and \cite{fading_2}. All these previous works only consider the optimal transmission of a single packet. Our work differs from these works in that we consider a sequential transmission scenario with a time average power constraint.

The contributions of the paper are as follows: First, we formulate this power constrained minimum delay problem as a frame-based stochastic optimization problem, where each frame is the time period for transmitting one single packet. Second, we propose a provably optimal frame-based online algorithm where an integer program is solved during each frame. We show that the integer program can be effectively solved through dynamic programming.  A nice feature of the resulting algorithm is that, while it assumes knowledge of the channel probability distribution, it does not require knowledge of the file size probability distribution. Further, the power constraint is shown to hold even if there is a mismatch between the actual channel probability distribution and the assumed distribution.

\section{Problem Formulation}
In this section, we present our system model and the delay minimization problem. Consider a slotted time single hop network where one source is transmitting a sequence of packets to a receiver with time varying channel states. The transmitter uses a rateless coding scheme where the receiver accumulates mutual information about each packet until it can successfully decode. The goal is to minimize the time average packet delay subject to an average power constraint on the source. For simplicity of analysis, we assume that the source's only significant power expenditure lies in transmissions, and coding/decoding yields negligible power and time consumption.

Let $\delta>0$ be a fixed amount of mutual information, called a \emph{data unit}.
Let $L[f]$ be the length of the $f$-th packet, which is a positive integer number of data units associated with packet $f$. We assume that $L[f]$ is i.i.d. over $f\in\{0,1,2,\cdots\}$ and takes values in a finite set $\mathcal{L}$.
Also, assume that the source uses \emph{ideal rateless codes} for transmission so that the mutual information collected by the receiver from different slots for the same packet add up with each other. Then, the receiver can decode packet $f$ if and only if it accumulates at least $L[f]$ data units for that packet. Finally, we assume that the source cannot transmit any new packet until the transmission of the previous  packet is finished.

The source has different power spectral density (PSD) options to take, which is in units of joules/slot/Hz. Let $P(t)\in\mathcal{P}$ be the option taken by the source at slot $t$. We assume $\mathcal{P}=\{P_1,P_2,\cdots,P_{|\mathcal{P}|}\}$ is finite. Without loss of generality, further assume  $0<P_1< P_2<\cdots< P_{|\mathcal{P}|}$. The channel power gain at time slot $t$ is denoted as $\alpha(t)$. We assume $\alpha(t)>0$ is i.i.d. over slot and takes values in a finite set $\mathcal{A}=\{\alpha_1,\alpha_2,\cdots,\alpha_{|\mathcal{A}|}\}$ with probability distribution $\phi(\alpha_i)$. They satisfy
\[\phi(\alpha_i)\geq0,~\sum_{i=1}^{|\mathcal{A}|}\phi(\alpha_i)=1.\]
Then, the mutual information transmitted during a single time slot $t$ can be expressed as $K(\alpha(t),P(t))$, which is also a positive integer number of data units. We make no assumption on the specific form of $K(\alpha(t),P(t))$ except that it must be nondecreasing with respect to both $\alpha(t)$ and $P(t)$. For example, if the source uses fountain codes, then by Shannon's formula, $K(\alpha(t),P(t))=\left\lceil\log_2\left(1+\frac{\alpha(t)P(t)}{N_0}\right)\right\rceil$, where $N_0/2$ denotes the PSD of the white noise process.


For the ease of algorithm development, we further segment the slotted timeline into \emph{frames}. Define frame $f$ as the time period (in units of slots) of transmitting packet $f$, with starting slot denoted as $t_f$ and frame length denoted as $T[f]\triangleq t_{f+1}-t_f$. Frame $f$ ends on the first slot in which the total mutual information for that frame exceeds $L[f]$, which is
\[\sum_{t=t_f}^{t_{f+1}-1}K(\alpha(t),P(t))\geq L[f].\]
We make the following assumption regarding the random processes $\{L[f]\}_{f=0}^\infty$ and $\{\alpha(t)\}_{t=0}^\infty$:
\begin{assumption}
The distribution of packet length process $\{L[f]\}_{f=0}^\infty$ is unknown to the source but the source can observe $L[f]$ at the beginning of each frame. The process $\{\alpha(t)\}_{t=0}^\infty$ is \emph{unobservable}, meaning that the source cannot see $\alpha(t)$ before transmission at slot $t$, but the source has the knowledge of distribution $\phi(\alpha_i),~\alpha_i\in\mathcal{A}$.
\end{assumption}

Furthermore, at the end of slot $t$, the transmitter provides a feedback signal so that the source effectively knows $\alpha(t)$ and thus how much mutual information has been transmitted.
The goal is to choose the PSD allocation $P(t)$ at each time slot so as to minimize the average delay (in units of slots/packet) subject to a time average power constraint:
\begin{align*}
\min~~&\lim_{F\rightarrow\infty}\frac{\sum_{f=0}^{F-1}T[f]}{F}\\
\textrm{s.t.}~~&\lim_{F\rightarrow\infty}\frac{\sum_{f=0}^{F-1}\sum_{t=t_f}^{t_{f+1}-1}P(t)}{\sum_{f=0}^{F-1}T[f]}\leq\beta,
\end{align*}
where $\beta$ is a positive constant. An equivalent formulation which does not include a ratio of time averages in the constraint is as follows:
\begin{align}
\min~~&\lim_{F\rightarrow\infty}\frac{1}{F}\sum_{f=0}^{F-1}T[f]\label{eq_1}\\
s.t.~~&\lim_{F\rightarrow\infty}\frac{1}{F}\sum_{f=0}^{F-1}\sum_{t=t_f}^{t_{f+1}-1}\left(P(t)-\beta\right)\leq0 \label{eq_2}.
\end{align}
For the rest of the paper, we shall consider \eqref{eq_1}-\eqref{eq_2} instead.
We also need the assumption that the minimal PSD allocation $P_1$ is greater than 0 but small enough. This ensures that the frame length $T[f]$ is finite for any policy, and the problem described is feasible.
\begin{assumption}\label{assumption_feasible}
The minimal PSD option $P_1$ satisfies $0<P_1<\beta$.
\end{assumption}

Furthermore, let
\begin{align*}
  L_{\max}&=\max_{L\in\mathcal{L}}~L,\\
  K_{\min}&=\min_{\alpha\in\mathcal{A}}~K(\alpha, P_1).
\end{align*}
Here $L_{\max}$ stands for maximum packet length and $K_{\min}$ stands for minimum per slot mutual information transmission. Notice that since $K(\alpha, P)$ is a positive integer for any $\alpha$ and $P$, we have $K_{\min}\geq1$.

\section{An Online Algorithm}
In this section, we propose an online algorithm which adapts to the time varying packet lengths and channel power gains.
\subsection{Proposed algorithm}\label{online_algorithm}
The proposed algorithm operates over frames, and features a ``virtual queue'' technique treating the time average power constraint. Define a virtual queue $Q[f]$, which has initial condition $Q[0]=0$ and update equation:
\begin{equation}\label{queue_update}
Q[f+1]=\max\left\{Q[f]+\sum_{t=t_f}^{t_{f+1}-1}(P(t)-\beta),~0\right\}.
\end{equation}
The proposed algorithm then runs in the following two steps with a trade-off parameter $V>0$:
\begin{itemize}
  \item At the beginning of frame $f$, observe $Q[f]$, $L[f]$, and choose a sequence of PSD allocations so as to solve the following optimization problem:
      \begin{align}
      \min_{P(t)\in\mathcal{P}}&~~\expect{\left.
              \sum_{t=t_f}^{t_{f+1}-1}R_f(P(t))\right| Q[f], L[f]}\label{em_aver_1}\\
      \textrm{s.t.}&~~\sum_{t=t_f}^{t_{f+1}-1}K(\alpha(t), P(t))\geq L[f]\label{em_aver_2}
      \end{align}
      where for any allocation $P(t)\in\mathcal{P}$, the function $R_f(P(t))$ is defined as $R_f(P(t))\triangleq V+Q[f](P(t)-\beta)$, and
      the expectation is taken with respect to the random channel gain $\alpha(t)$.
%
  \item Update $Q[f]$ according to \eqref{queue_update} at the end of each frame.
\end{itemize}
Section \ref{section_perform_analysis} shows the above algorithm structure
produces a near optimal solution, where proximity to optimality is determined
by the $V$ parameter.  The following section shows how to implement each
state of the algorithm, that is, how to solve the problem \eqref{em_aver_1}-\eqref{em_aver_2}.

\subsection{Solve \eqref{em_aver_1}-\eqref{em_aver_2} via dynamic program under static channel gain}
In this section, we show how to solve \eqref{em_aver_1}-\eqref{em_aver_2} during each frame under static channel gain, i.e. $\alpha(t)=\alpha_0,~\forall t$, where $\alpha_0$ is a positive constant.
Then, we show how to generalize the algorithm to random channel case in the next section.

Given $Q[f]$ and $L[f]$, there is no randomness in the formulation \eqref{em_aver_1}-\eqref{em_aver_2}. Thus, we can simplify it as follows:
\begin{align}
      \min_{P(t)\in\mathcal{P}}&~~\sum_{t=t_f}^{t_{f+1}-1}R_f(P(t))\label{sub_dp}\\
      \textrm{s.t.}&~~\sum_{t=t_f}^{t_{f+1}-1}K(\alpha_0, P(t))\geq L[f]. \label{sub_dp_2}
\end{align}
Next, we distinguish between the following two cases:
\begin{enumerate}
  \item  Case 1: Suppose there exists some option $P_{j_0}\in\mathcal{P}$ such that $R_f(P_{j_0})<0$, then, since both $R_f(P(t))$ and $K(\alpha_0,P(t))$ are increasing with $P(t)$, the best decision is always choosing the smallest power allocation (i.e. $P(t)=P_1$) till the end of the frame.
  \item Case 2: Suppose $R_f(P_j)\geq0,~\forall P_j\in\mathcal{P}$. Then, we can interpret above problem in the following way: For each PSD option $P_j\in\mathcal{P}$, there is a corresponding penalty $R_f(P_j)$ and a weight $K(\alpha_0, P_j)$. The goal is to minimize the total penalty subject to the constraint that the total weight should be at least $L[f]$. Since there are only finite options in $\mathcal{P}$, \eqref{sub_dp}-\eqref{sub_dp_2} can be written as the following integer program problem:
      \begin{align*}
      \min_{x_j,~j\in\{1,\cdots,|\mathcal{P}|\}}~~&\sum_{j=1}^{|\mathcal{P}|}R_f(P_j)x_j\\
      \textrm{s.t.}~~&\sum_{j=1}^{|\mathcal{P}|}K(\alpha_0, P_j)x_j\geq L[f],~x_j\in\mathbb{N}.
      \end{align*}
      Notice that both $K(\alpha_0, P_j)$ and $L[f]$ are positive integers. This is similar to a classical Knapsack problem. A standard technique solving this type of problems is dynamic program (see \cite{integer-programming} for related results).
      Define the state variable $k,~k\in\mathbb{Z}$ and define $m[k],~k\in\mathbb{Z}$ as the minimum value achieved by solving the following problem:
\begin{align}
\min_{x_j,~j\in\{1,\cdots,|\mathcal{P}|\}}~~&\sum_{i=1}^{|\mathcal{P}|}R_f(P_j)x_j   \label{int_program_1}\\
s.t.~~&\sum_{i=1}^{|\mathcal{P}|}K(\alpha_0,P_j)x_j\geq k,~x_j\in\mathbb{N}. \label{int_program_2}
\end{align}
Define $m[k]=0,~\forall k\leq0$, then, we progessively compute $m[k]$ for each $k\in\{1,2,\cdots,L[f]\}$ as follows:
\begin{align}
m[k]&=\min_{j\in\{1,2,\cdots,|\mathcal{P}|\}}~R_f(P_j)+m[k-K(\alpha_0, P_j)],\nonumber\\
&~\forall k>0.   \nonumber
\end{align}
The proof of optimality is similar to that of knapsack problem which can be found, for example, in chapter 6 of \cite{integer-programming}.

\end{enumerate}

\subsection{Extension to random channel gain}\label{dp_random_channel}
In this section, we solve \eqref{em_aver_1}-\eqref{em_aver_2} under random channel gains. Again, we distinguish between the following two cases:
\begin{enumerate}
  \item Suppose there exists some option $P_{j_0}\in\mathcal{P}$ such that $R_f(P_{j_0})<0$. Then, since both $R_f(P(t))$ and $K(\alpha(t),P(t))$ are increasing with $P(t)$ for any $\alpha(t)\in\mathcal{A}$, the best decision is always choosing the smallest power allocation (i.e. $P(t)=P_1$) until the end of the frame.
  \item Suppose $R_f(P_j)\geq0,~\forall P_j\in\mathcal{P}$. Then, in a similar way as before, we get a randomized integer program, which can still be solved by dynamic program.
  Define $k,~k\in\mathbb{Z}$ as the state variable and define $m[k]$, $k\in\mathbb{Z}$ as the minimum value achieved by the following problem:
       \begin{align}
       \min_{P(t)\in\mathcal{P}}&~~\expect{\left.VT[f]
              +Q[f]\sum_{t=t_f}^{t_{f+1}-1}\left(P(t)-\beta\right)\right| Q[f], L[f]}\label{appr_aver_1}\\
       s.t.&~~\sum_{t=t_f}^{t_{f+1}-1}K(P(t),\alpha(t))\geq k\label{appr_aver_2}
      \end{align}
   Define $m[k]=0,~\forall k\leq0$, then, we progessively compute $m[k]$ for each $k\in\{1,2,\cdots,L[f]\}$ as follows
   \begin{align}
       m[k]=\min_{j\in\{1,2,\cdots,|\mathcal{P}|\}}&~R_f(P_j)+\sum_{i=1}^{|\mathcal{A}|}\phi(\alpha_i)m[k\nonumber\\
       &-K(\alpha_i,P_j)],~\forall k>0, \label{value_function_2}
   \end{align}
   The proof that \eqref{value_function_2} indeed leads to optimal policy during each frame is similar to that of a standard proof of optimality for a dynamic program (see \cite{DP&control} for more details). For completeness, we provide a proof using induction in Appendix \ref{proof_dp}.
\end{enumerate}

\section{Performance Analysis}\label{section_perform_analysis}
In this section, we prove that the online algorithm in Section \ref{online_algorithm} satisfies the power constraint and achieves the near optimal throughput (with $\mathcal{O}(1/V)$ optimality gap). We first show that
the virtual queue $Q[f]$ is bounded, which implies that the time average of ``arrival process'' $\sum_{t=t_f}^{t_{f+1}-1}P(t)$ is less than or equal to that of ``service process'' $\beta T[f]$, and thus the power constraint is satisfied. Then, we prove that the algorithm gives near optimal performance by Lyapunov drift analysis.

\subsection{Average power constraint via queue bound}\label{section_constraint}
\begin{lemma}
If there exists a constant $C>0$ such that $Q[f]\leq C$, then, it follows for any $F>0$,
\[\frac{1}{F}\sum_{f=0}^{F-1}\sum_{t=t_f}^{t_{f+1}-1}\left(P(t)-\beta\right)\leq\frac{C}{F},\]
thus,
\[\limsup_{F\rightarrow\infty}\frac{1}{F}\sum_{f=0}^{F-1}\sum_{t=t_f}^{t_{f+1}-1}\left(P(t)-\beta\right)\leq0.\]
\end{lemma}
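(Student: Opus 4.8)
The plan is to exploit the telescoping structure of the virtual-queue update \eqref{queue_update}. The crucial observation is that the $\max\{\cdot,0\}$ operation can only increase the value of its first argument, so dropping it yields a clean lower bound on the one-step queue increment, which in turn upper-bounds the per-frame excess power. This is the standard virtual-queue accounting argument in Lyapunov optimization, and here it requires nothing beyond a single telescoping sum.

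First I would observe that, directly from \eqref{queue_update},
\[
Q[f+1]=\max\left\{Q[f]+\sum_{t=t_f}^{t_{f+1}-1}(P(t)-\beta),~0\right\}\geq Q[f]+\sum_{t=t_f}^{t_{f+1}-1}(P(t)-\beta),
\]
since the maximum of two quantities is at least the first one. Rearranging gives the per-frame bound
\[
\sum_{t=t_f}^{t_{f+1}-1}(P(t)-\beta)\leq Q[f+1]-Q[f].
\]
Next I would sum this inequality over $f=0,1,\dots,F-1$. The right-hand side telescopes to $Q[F]-Q[0]$, and invoking the initial condition $Q[0]=0$ together with the hypothesized bound $Q[F]\leq C$, I obtain
\[
\sum_{f=0}^{F-1}\sum_{t=t_f}^{t_{f+1}-1}(P(t)-\beta)\leq Q[F]-Q[0]=Q[F]\leq C.
\]
Dividing through by $F$ establishes the first claimed inequality, and taking $\limsup_{F\rightarrow\infty}$ while noting that $C/F\rightarrow0$ yields the second claim.

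I do not anticipate any genuine obstacle here: the entire argument reduces to one telescoping sum, and the only subtlety is dropping the $\max$ in the correct direction, which is legitimate precisely because discarding it gives a valid lower bound on $Q[f+1]$ and hence an upper bound on the accumulated excess power. Note that the boundedness of $Q[f]$ is assumed as a hypothesis of this lemma (it would be established separately), so the present statement is purely the bookkeeping step that converts that bound into satisfaction of the average power constraint \eqref{eq_2}.
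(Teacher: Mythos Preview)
Your proposal is correct and follows essentially the same approach as the paper: drop the $\max$ in \eqref{queue_update} to get $Q[f+1]\geq Q[f]+\sum_{t=t_f}^{t_{f+1}-1}(P(t)-\beta)$, telescope over $f=0,\dots,F-1$, bound $Q[F]-Q[0]\leq C$, divide by $F$, and pass to the limit. The only cosmetic difference is that you explicitly invoke $Q[0]=0$, whereas the paper simply writes $Q[F]-Q[0]\leq C$.
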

\begin{proof}
From \eqref{queue_update}, we know that for each frame $f$:
\begin{equation}
    Q[f+1] \geq Q[f]
     +\sum_{t=t_f}^{t_{f+1}-1}(P(t)-\beta) \nonumber
\end{equation}
Rearranging terms gives:
\[ \sum_{t=t_f}^{t_{f+1}-1}(P(t)-\beta) \leq Q[f+1]-Q[f] \]
Fix $F>0$. Summing over $f\in\{0,1,\cdots,F-1\}$  gives:
\begin{eqnarray*}
    \sum_{f=0}^{F-1}\sum_{t=t_f}^{t_{f+1}-1}(P(t)-\beta) \leq Q[F] - Q[0] \leq C,
\end{eqnarray*}
which implies:
\[ \frac{1}{F}\sum_{f=0}^{F-1}\sum_{t=t_f}^{t_{f+1}-1}(P(t)-\beta) \leq \frac{C}{F}. \]
Taking $F\rightarrow\infty$ gives the desired result.
\end{proof}

The next lemma shows that the queue process deduced from proposed algorithm is deterministically bounded.

\begin{lemma}
If $Q[0]=0$, then, the proposed algorithm gives for all $f\in\mathbb{N}^+$,
\[Q[f]\leq\max\left\{\frac{V}{\beta-P_1}+\left\lceil\frac{L_{\max}}{K_{\min}}\right\rceil\cdot(P_{|\mathcal{P}|}-\beta),~0\right\},\]
where $\lceil x\rceil$ denotes the ceiling function rounding up towards the smallest integer greater than $x$.
\end{lemma}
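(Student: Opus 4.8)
The plan is to prove the bound $Q[f]\le\max\{C,0\}$ by induction on $f$, where $C:=\frac{V}{\beta-P_1}+\lceil L_{\max}/K_{\min}\rceil(P_{|\mathcal{P}|}-\beta)$ abbreviates the claimed right-hand side. Two elementary observations drive the argument. First, since $R_f(P_1)=V+Q[f](P_1-\beta)$ and $P_1<\beta$ by Assumption \ref{assumption_feasible}, the option $P_1$ satisfies $R_f(P_1)<0$ precisely when $Q[f]>\frac{V}{\beta-P_1}$; in that regime there exists an option with negative $R_f$, so the per-frame subproblem \eqref{em_aver_1}--\eqref{em_aver_2} falls into Case 1 of Section \ref{dp_random_channel}, and the algorithm selects $P(t)=P_1$ on every slot of frame $f$. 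Second, because $K(\alpha,P)$ is nondecreasing in $P$ and $K(\alpha,P_1)\ge K_{\min}\ge1$, every slot accumulates at least $K_{\min}$ data units regardless of the power chosen, so the frame terminates after at most $\lceil L[f]/K_{\min}\rceil\le\lceil L_{\max}/K_{\min}\rceil$ slots; that is, $T[f]\le\lceil L_{\max}/K_{\min}\rceil$.

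For the induction, the base case $Q[0]=0\le\max\{C,0\}$ holds by hypothesis. Assuming $Q[f]\le\max\{C,0\}$, I would split on the threshold $\frac{V}{\beta-P_1}$. If $Q[f]>\frac{V}{\beta-P_1}$, the first observation forces $P(t)=P_1$ throughout the frame, so $\sum_{t=t_f}^{t_{f+1}-1}(P(t)-\beta)=T[f](P_1-\beta)<0$; the update \eqref{queue_update} then gives $Q[f+1]\le Q[f]\le\max\{C,0\}$. If instead $Q[f]\le\frac{V}{\beta-P_1}$, I would bound the per-frame increment crudely using $P(t)\le P_{|\mathcal{P}|}$ together with the frame-length bound, obtaining $\sum_{t=t_f}^{t_{f+1}-1}(P(t)-\beta)\le T[f](P_{|\mathcal{P}|}-\beta)\le\lceil L_{\max}/K_{\min}\rceil(P_{|\mathcal{P}|}-\beta)$, whence $Q[f+1]\le\frac{V}{\beta-P_1}+\lceil L_{\max}/K_{\min}\rceil(P_{|\mathcal{P}|}-\beta)=C\le\max\{C,0\}$. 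This closes the induction.

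The conceptual crux, and the step I expect to require the most care, is the first observation: the bound is not a generic consequence of the update equation but is enforced by the algorithm's own decision rule, namely that a sufficiently large backlog $Q[f]$ makes $R_f(P_1)$ negative and thereby triggers the conservative Case 1 policy that halts the queue's growth. One must verify that $Q[f]>\frac{V}{\beta-P_1}$ genuinely places us in Case 1, which is immediate since $P_1$ itself is an option with $R_f(P_1)<0$. A minor edge case worth recording is $P_{|\mathcal{P}|}\le\beta$: then $P(t)-\beta\le0$ for every admissible power, so the queue never increases and remains $0$, consistent with the stated bound; the interesting regime is $P_{|\mathcal{P}|}>\beta$, where $C>0$ and $\max\{C,0\}=C$.
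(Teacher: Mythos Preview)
Your proposal is correct and follows essentially the same approach as the paper: separate the trivial case $P_{|\mathcal{P}|}\le\beta$, then induct on $f$ splitting at the threshold $Q[f]=V/(\beta-P_1)$, using the frame-length bound $T[f]\le\lceil L_{\max}/K_{\min}\rceil$ above the threshold and the observation that $R_f(P_1)<0$ forces the algorithm into Case~1 (constant allocation $P_1$, hence non-increasing queue) below it. The only cosmetic point is that your inequality $T[f](P_{|\mathcal{P}|}-\beta)\le\lceil L_{\max}/K_{\min}\rceil(P_{|\mathcal{P}|}-\beta)$ tacitly uses $P_{|\mathcal{P}|}\ge\beta$, so it is cleaner to dispose of the edge case $P_{|\mathcal{P}|}\le\beta$ up front (as the paper does) rather than append it afterward.
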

\begin{proof}
First, consider the case when $P_{|\mathcal{P}|}\leq\beta$. From \eqref{queue_update} it is obvious that $Q[f]$ can never increase. Thus, $Q[f]=0,~\forall f\in\mathbb{N}^+$.

Next, consider the case when $P_{|\mathcal{P}|}>\beta$.
It is obvious that the frame length $T[f]$ is upper bounded by $\left\lceil\frac{L_{\max}}{K_{\min}}\right\rceil$. We then prove the assertion by induction on $f$. The result trivially holds for $f=0$.  Suppose it holds at $f=l$ for $l\in\mathbb{N}^+$, so that:
\[Q[l]\leq\frac{V}{\beta-P_1}+\left\lceil\frac{L_{\max}}{K_{\min}}\right\rceil\cdot(P_{|\mathcal{P}|}-\beta).\]
We are going to prove the same holds for $f=l+1$. There are two further cases to be considered:
\begin{enumerate}
  \item $Q[l]\leq\frac{V}{\beta-P_1}$. In this case we have by \eqref{queue_update}:
  \begin{eqnarray*}
   Q[l+1] &\leq& Q[l] + T[l](P_{|\mathcal{P}|} - \beta) \\
   &\leq& \frac{V}{\beta-P_1} + \left\lceil\frac{L_{\max}}{K_{\min}}\right\rceil\cdot(P_{|\mathcal{P}|} - \beta).
   \end{eqnarray*}

    \item $\frac{V}{\beta-P_1}<Q[l]\leq\frac{V}{\beta-P_1} + \left\lceil\frac{L_{\max}}{K_{\min}}\right\rceil\cdot(P_{|\mathcal{P}|} - \beta)$. In this case, we have by first half of induction hypothesis,
        \[Q[l]P_1+V-\beta Q[l]<0.\]
        By the first case in Section \ref{dp_random_channel}, the source will choose power allocation $P(t)=P_1,~\forall t\in[t_l, t_{l+1}-1]$. Since $P_1<\beta$ by assumption \ref{assumption_feasible}, it follows from the queue updating rule \eqref{queue_update} that the queue will increase no more at the end of frame $l$. By the second half of the induction hypothesis,
        \[Q[l+1]\leq\frac{V}{\beta-P_1} + \left\lceil\frac{L_{\max}}{K_{\min}}\right\rceil\cdot(P_{|\mathcal{P}|} - \beta).\]
\end{enumerate}
This finishes the proof.
\end{proof}
Combining above two lemmas gives the result that the proposed algorithm satisfies the power constraint. Notice that above result is a sample path result which only assumes $K_{\min}>0$, $\beta-P_1>0$ and probability $\phi(\cdot)\in[0,1]$. Thus, the algorithm meets the average power constraints even if it uses incorrect values for these parameters.

\subsection{Optimization over randomized stationary algorithm}\label{randomized_stationary}
Suppose the server knows the probability distribution of packet length. Then, consider the following class of algorithm: At the beginning of each frame $f$, after observing the packet length $L[f]$, the source independently selects a sequence of power allocation $P^*(t), t\in[t_{f},~t_{f+1}-1]$ according to some pre-specified probability distribution which depends only on $L[f]$ and feedback during that frame.

Then, since $L[f]$ is an i.i.d. process over frames, this algorithm is stationary. Furthermore, let $\{T^*[f]\}_{f=0}^\infty$ and $\left\{\sum_{t=t_f}^{t_{f}+T^*[f]-1}P^*(t)\right\}_{f=0}^\infty$ be two processes corresponding to this randomized algorithm. It can be shown that these two processes are both i.i.d. over frames. By strong law of large numbers,
\begin{align*}
&\lim_{F\rightarrow\infty}\frac{1}{F}\sum_{f=0}^{F-1}T^*[f]
=\expect{T^*[f]},~w.p.1,\\
&\lim_{F\rightarrow\infty}\frac{1}{F}\sum_{f=0}^{F-1}\sum_{t=t_f}^{t_{f}+T^*[f]-1}\left(P^*(t)-\beta\right)\nonumber\\
&=\expect{\sum_{t=t_f}^{t_{f}+T^*[f]-1}\left(P^*(t)-\beta\right)},~w.p.1.
\end{align*}
Furthermore, it can be shown that the optimal solution to \eqref{eq_1}-\eqref{eq_2} can be achieved over this class of algorithms (see \cite{renewal-opt-tac} for related results). That being said, there exists a randomized stationary algorithm such that
\begin{align}
&\expect{T^*[f]}=\theta^*,\label{iid_theta}\\
&\expect{\sum_{t=t_f}^{t_{f}+T^*[f]-1}\left(P^*(t)-\beta\right)}\leq0,\label{iid_P}
\end{align}
where $\theta^*$ is the optimal delay solving \eqref{eq_1}-\eqref{eq_2}. As we shall see, this result plays an important role in our subsequent proof of optimality. Unfortunately, this algorithm is not implementable because the source has no knowledge on the distribution of packet length $L[f]$. Even if it has the knowledge, the computation complexity is very high because of the possibly large packet length set $\mathcal{L}$.

\subsection{Key feature inequality}
We give a feature of our proposed online algorithm which is the key to the proof of optimality.
Define $\mathcal{H}[f]$ as the \emph{system history} up to frame $f$. Consider the algorithm that at the beginning of frame $f$, observes $Q[f]$, $L[f]$ and then chooses a sequence of PSD allocations $P(t)\in\mathcal{P}$ so as to minimize \eqref{em_aver_1}-\eqref{em_aver_2}. Then, the policy must have achieved a smaller value on \eqref{em_aver_1} compared to that of best stationary policy from Section \ref{randomized_stationary}. Mathematically, this can be expressed as the following inequality:
\begin{align*}
&\expect{\left.VT[f]+Q[f]\sum_{t=t_f}^{t_{f+1}-1}\left(P(t)-\beta\right)\right| \mathcal{H}[f],L[f]}\\
\leq&\expect{\left.VT^*[f]+Q[f]\sum_{t=t_f}^{t_{f}+T^*[f]-1}\left(P^*(t)-\beta\right)\right| \mathcal{H}[f],L[f]},
\end{align*}
where the left-hand-side uses values $T[f]$ and $P(t)$ from the implemented policy, and the right-hand-side uses values $T^*[f]$ and $P^*(t)$ from the stationary policy.

Taking expectation from both sides regarding $L[f]$ gives
\begin{align*}
&\expect{\left.VT[f]+Q[f]\sum_{t=t_f}^{t_{f+1}-1}\left(P(t)-\beta\right)\right| \mathcal{H}[f]}\\
\leq&\expect{\left.VT^*[f]+Q[f]\sum_{t=t_f}^{t_{f}+T^*[f]-1}\left(P^*(t)-\beta\right)\right| \mathcal{H}[f]}.
\end{align*}
Using the fact that the stationary randomized policy is independent of history as well as \eqref{iid_theta} and \eqref{iid_P},
\begin{equation}\label{key_feature}
\expect{\left.VT[f]+Q[f]\sum_{t=t_f}^{t_{f+1}-1}\left(P(t)-\beta\right)\right| \mathcal{H}[f]}\leq V\theta^*
\end{equation}

\subsection{Near optimal performance}
\begin{theorem}
The proposed algorithm satisfies the constraint and yields the time average delay satisfying
\[\limsup_{F\rightarrow\infty}\frac{1}{F}\sum_{f=0}^{F-1}T[f]\leq\theta^*+\frac{C_0}{V},\]
with probability 1, where $C_0$ is a positive constant independent of $V$.
\end{theorem}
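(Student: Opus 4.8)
The plan is to run a frame-based Lyapunov drift-plus-penalty argument. Constraint satisfaction requires no new work: the two preceding lemmas together show that $Q[f]$ is deterministically bounded and that such a bound forces the time-average power constraint \eqref{eq_2} to hold, so I would focus on the delay bound.

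First I would introduce the quadratic Lyapunov function $\Phi[f]\triangleq\frac{1}{2}Q[f]^2$ and abbreviate the net virtual-queue arrival during frame $f$ as $A[f]\triangleq\sum_{t=t_f}^{t_{f+1}-1}(P(t)-\beta)$. Squaring \eqref{queue_update} and using $(\max\{x,0\})^2\le x^2$ yields the one-frame drift bound
\[
\Phi[f+1]-\Phi[f]\le Q[f]A[f]+\tfrac{1}{2}A[f]^2 .
\]
Since $|P(t)-\beta|\le\max\{\beta-P_1,\,P_{|\mathcal{P}|}-\beta\}$ and the frame length satisfies $T[f]\le\lceil L_{\max}/K_{\min}\rceil$, the quantity $A[f]^2$ is bounded above by a constant $B$ that does not depend on $V$. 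Adding the penalty $V T[f]$ to both sides gives
\[
\Phi[f+1]-\Phi[f]+V T[f]\le\big(V T[f]+Q[f]A[f]\big)+\tfrac{B}{2},
\]
and the parenthesized term on the right is exactly the quantity controlled by the key feature inequality \eqref{key_feature}.

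The main obstacle is that \eqref{key_feature} bounds only a conditional expectation, whereas the theorem asserts an almost-sure bound. To bridge this, I would write $g[f]\triangleq V T[f]+Q[f]A[f]$ and decompose it as $g[f]=Z[f]+\expect{\left.g[f]\,\right|\mathcal{H}[f]}$, where $Z[f]\triangleq g[f]-\expect{\left.g[f]\,\right|\mathcal{H}[f]}$ is a martingale-difference sequence adapted to the filtration $\{\mathcal{H}[f]\}$. By \eqref{key_feature} the conditional-expectation piece is at most $V\theta^*$. Crucially, the deterministic queue bound of the previous lemma, combined with the bounds on $T[f]$ and $A[f]$, makes $g[f]$ uniformly bounded, so $\{Z[f]\}$ is a bounded martingale-difference sequence.

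Finally I would telescope the drift-plus-penalty inequality over $f\in\{0,\dots,F-1\}$, use $Q[0]=0$ so that $\Phi[0]=0$, discard the nonnegative term $\Phi[F]$, and divide by $VF$ to obtain
\[
\frac{1}{F}\sum_{f=0}^{F-1}T[f]\le\theta^*+\frac{B}{2V}+\frac{1}{VF}\sum_{f=0}^{F-1}Z[f].
\]
Because $\{Z[f]\}$ is a bounded martingale-difference sequence, the strong law of large numbers for martingales (equivalently, Azuma--Hoeffding together with the Borel--Cantelli lemma) gives $\frac{1}{F}\sum_{f=0}^{F-1}Z[f]\to0$ with probability $1$. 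Taking $\limsup_{F\to\infty}$ then yields the stated bound with $C_0=B/2$, a constant independent of $V$.
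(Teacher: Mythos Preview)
Your proposal is correct and follows essentially the same drift-plus-penalty route as the paper: the same quadratic Lyapunov function, the same per-frame bound on the squared arrival term, and the same invocation of the key feature inequality \eqref{key_feature}. The only cosmetic difference is in the passage from a conditional-expectation bound to an almost-sure bound: the paper packages this step by verifying $\sum_f\expect{\Delta[f]^2}/f^2<\infty$ (from the deterministic queue bound) and citing Corollary~4.2 of \cite{JAM2012}, whereas you unpack it explicitly via the martingale-difference decomposition $g[f]=Z[f]+\expect{g[f]\mid\mathcal{H}[f]}$ and the SLLN for bounded martingale differences---both arguments rest on the same deterministic bound for $Q[f]$ and yield the same constant $C_0$ up to the choice of crude bound on $|P(t)-\beta|$.
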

\begin{proof}
First, it is obvious from the two lemmas in Section \ref{section_constraint} that the time average constraint is satisfied. The rest is devoted to proving the delay guarantee.

Define the Lyapunov drift $\Delta[f]$ for the virtual queue $Q[f]$ as follows
\[\Delta[f]=\frac{1}{2}\left(Q[f+1]^2-Q[f]^2\right).\]
Substitute the definition of $Q[f+1]$ gives
\begin{align*}
\Delta[f]\leq&\frac{1}{2}\left(Q[f]+\sum_{t=t_f}^{t_{f+1}-1}(P(t)-\beta)\right)^2-\frac{1}{2}Q[f]^2\\
=&\frac{1}{2}\left(\sum_{t=t_f}^{t_{f+1}-1}(P(t)-\beta)\right)^2+Q[f]\sum_{t=t_f}^{t_{f+1}-1}(P(t)-\beta).
\end{align*}
Recall that the frame length $T[f]$ is upper bounded by $\left\lceil\frac{L_{\max}}{K_{\min}}\right\rceil$ and $P(t)\leq P_{|\mathcal{P}|}$. Let $C_0\triangleq\frac{1}{2}\left(\left\lceil\frac{L_{\max}}{K_{\min}}\right\rceil(P_{|\mathcal{P}|}+\beta)\right)^2$, it follows
\[\Delta[f]\leq C_0+Q[f]\sum_{t=t_f}^{t_{f+1}-1}(P(t)-\beta).\]
Now, we consider the following expected drift term $\Delta[f]$ plus penalty term $VT[f]$:
\begin{align*}
&\expect{\Delta[f]+VT[f]|\mathcal{H}[f]}\\
\leq& C_0+\expect{\left.Q[f]\sum_{t=t_f}^{t_{f+1}-1}(P(t)-\beta)+VT[f]\right|\mathcal{H}[f]}\\
\leq& C_0+V\theta^*,
\end{align*}
where the last inequality follows from \eqref{key_feature}. Thus, rearranging the terms gives
\begin{equation}
\expect{\Delta[f]+V(T[f]-\theta^*)|\mathcal{H}[f]}\leq C_0.\label{intermediate_1}
\end{equation}
Since $Q[f]$ is deterministically bounded, $\Delta[f]$ is also deterministically bounded, thus, the following holds
\begin{equation}
\sum_{f=0}^{\infty}\frac{\expect{\Delta[f]^2}}{f^2}<\infty.\label{intermediate_2}
\end{equation}
Corollary 4.2 of \cite{JAM2012} states that if $\Delta[f]$ satisfies \eqref{intermediate_2}, then, \eqref{intermediate_1} implies the following holds with probability 1:
\[\limsup_{F\rightarrow\infty}\frac{1}{F}\sum_{f=0}^{F-1}V(T[f]-\theta^*)\leq C_0.\]
Divide both sides by $V$ and rearrange the terms gives the desired result.
\end{proof}

\appendices
\section{Proof of optimality for \eqref{value_function_2}}\label{proof_dp}
\begin{proof}
First of all, for any $k\leq0$, since $R_f(P_j)\geq0,~\forall j\in\{1,2,\cdots,|\mathcal{P}|\}$, the best solution is not choosing any PSD allocation at all. This results in $m[k]=0,~\forall k\leq0$. Next, for $k>0$, we use induction to show $m[k]$ indeed satisfies the iteration \eqref{value_function_2}.
\begin{enumerate}
  \item Base case $k=1$. Since $K(\alpha_i,P_j)$ is a positive integer, \eqref{value_function_2} gives $m[1]=\min_jR(P_j)$. On the other hand, the best solution is to place one weight which has the least penalty $R(P_j)$, which obviously implies $m[1]$ satisfies the iteration \eqref{value_function_2}.
  \item Suppose $m[k]$ satisfies \eqref{value_function_2} for all $k\in\{1,2,\cdots,k_0\}$, then, we want to prove $m[k_0+1]$ also satisfies \eqref{value_function_2}. We show that
      \begin{align}\label{case_1}
      m[k_0+1]\leq& \min_{j\in\{1,\cdots,|\mathcal{P}|\}}~R_f(P_j)\nonumber\\
       &+\sum_{i=1}^{|\mathcal{A}|}\phi(\alpha_i)m[k_0+1-K(\alpha_i,P_j)]
      \end{align}
      and
      \begin{align}\label{case_2}
      m[k_0+1]\geq &\min_{j\in\{1,\cdots,|\mathcal{P}|\}}~R_f(P_j) \nonumber\\
      &+\sum_{i=1}^{|\mathcal{A}|}\phi(\alpha_i)m[k_0+1-K(\alpha_i,P_j)]
      \end{align}
      respectively.

      \textit{Proof of \eqref{case_1}:} Consider the following specific allocation policy: Pick any $P_j\in\mathcal{P}$, and then adopt the policy which solves \eqref{appr_aver_1}-\eqref{appr_aver_2} for $k=k_0+1-K(\alpha_i,P_j)$. By induction hypothesis, the resulting penalty of this policy is $R_f(P_j)+\sum_{i=1}^{|\mathcal{A}|}\phi(\alpha_i)m[k_0+1-K(\alpha_i,P_j)]$. Since this is a specific policy, it follows,
      \begin{align*}
      &m[k_0+1]\leq R_f(P_j)+\sum_{i=1}^{|\mathcal{A}|}\phi(\alpha_i)m[k_0+1-K(\alpha_i,P_j)],\\
      &\forall j\in\{1,\cdots,|\mathcal{P}|\},
      \end{align*}
      which implies \eqref{case_1}.

      \textit{Proof of \eqref{case_2}:} Suppose $\hat{P}(t_f),~\hat{P}(t_f+1),\cdots,~\hat{P}(t_{f+1}-1)$ is a sequence of PSD allocations generated by the optimal policy minimizing \eqref{appr_aver_1}-\eqref{appr_aver_2} for $k=k_0+1$. Given the event $\alpha(t_f)$ and allocation $\hat{P}(t_f)$ at time slot $t_f$, we look at the remaining sequence of PSD allocations $\hat{P}(t_f+1),\cdots,~\hat{P}(t_{f+1}-1)$. It can be regarded as a sequence generated by a specific policy under the constraint
      \[\sum_{t=t_f+1}^{t_{f+1}-1}K(\alpha(t),\hat{P}(t))\geq k_0+1-K(\alpha(t_f),\hat{P}(t_f)),\]
      which does not necessarily minimize \eqref{appr_aver_1}-\eqref{appr_aver_2} when $k=k_0+1-K(\alpha(t_f),\hat{P}(t_f))$. Formally, this idea can be expressed as
      \begin{align*}
      &m[k_0+1]\\
      =&\expect{\sum_{t=t_f}^{t_{f+1}-1}R_f(\hat{P}(t)):\sum_{t=t_f}^{t_{f+1}-1}K(\alpha(t),\hat{P}(t))\geq k_0+1}\\
      =&\mathbb{E}\left[R_f(\hat{P}(t_f))+\mathbb{E}\left[\sum_{t=t_f+1}^{t_{f+1}-1}R_f(\hat{P}(t)):\right.\right.\\
      &\left.\left.\left.\sum_{t=t_f+1}^{t_{f+1}-1}K(\alpha(t),\hat{P}(t))\geq k_0+1-K(\alpha(t_f),\hat{P}(t_f))\right|\mathcal{F}_{t_f}\right]\right]\\
      \geq&\mathbb{E}\left[R_f(\hat{P}(t_f))+\min_{P(t)\in\mathcal{P}}\left[\sum_{t=t_f+1}^{t_{f+1}-1}R_f(P(t)):\right.\right.\\
      &\left.\left.\left.\sum_{t=t_f+1}^{t_{f+1}-1}K(\alpha(t),P(t))\geq k_0+1-K(\alpha(t_f),\hat{P}(t_f))\right|\mathcal{F}_{t_f}\right]\right]\\
      =&\expect{R_f(\hat{P}(t_f))+m[k_0+1-K(\alpha(t_f),\hat{P}(t_f))]},
      \end{align*}
      where $\mathcal{F}_{t_f}$ denotes all the system information up until time slot $t_f$ and the last equality follows from induction hypothesis. Finally, since $\hat{P}(t_f)$ also comes from a specific policy, it follows,
      \begin{align*}
      m[k_0+1]\geq&\min_{j\in\{1,\cdots,|\mathcal{P}|\}} R_f(P_j)\\
      &+\expect{m[k_0+1-K(\alpha(t_f),P_j)]}\\
      =&\min_{j\in\{1,\cdots,|\mathcal{P}|\}} R_f(P_j)\\
      &+\sum_{i=1}^{|\mathcal{A}|}\phi(\alpha_i)m[k_0+1-K(\alpha_i,P_j)]
      \end{align*}

\end{enumerate}
Overall, we proved the optimality.
\end{proof}

\bibliographystyle{unsrt}
\bibliography{bibliography/refs}

\end{document}